\documentclass[letter, 10pt, conference]{ieeeconf}      % Use this line for a4 paper

\IEEEoverridecommandlockouts                              % This command is only needed if 
\usepackage{amsmath}
\usepackage{amssymb}
\usepackage{amsfonts}
\usepackage{graphicx}
\usepackage{enumerate}
\usepackage{mathrsfs}
\usepackage{float}
\usepackage{cite}
\usepackage{tikz}
\usetikzlibrary{matrix,arrows,decorations.pathmorphing}
\usepackage{verbatim}
\usepackage{mathtools}
\mathtoolsset{showonlyrefs}
\usepackage{caption}
\usepackage{subcaption}
\usepackage{soul}
\setstcolor{magenta}
\usetikzlibrary{arrows}
\usepackage{tabularx}

\newcommand{\hma}[1]{\textcolor{black}{#1}}
\newcommand{\hjk}[1]{\textcolor{black}{#1}}

\graphicspath{{./figs/}}

\title{\LARGE \bf
Markov Chains and Random Walks with Memory on Hypergraphs: A Tensor-Based Approach$^{*}$ }

\author{Shaoxuan Cui$^{1,4}$, Lingfei Wang$^{2}$, Hildeberto Jardón-Kojakhmetov$^{1}$, Karl Henrik Johansson$^{2}$ and Ming Cao$^{3}$ % stops a space
\thanks{$^{1}$ S. Cui, and H. Jard\'on-Kojakhmetov are with the Bernoulli Institute for Mathematics, Computer Science and Artificial Intelligence, University of Groningen, Groningen, 9747 AG Netherlands {\tt\small \{s.cui, h.jardon.kojakhmetov\}@rug.nl}}
\thanks{$^{2}$ L. Wang and K. Johansson are with the Division of Decision and Control Systems, the School of Electrical Engineering and Computer Science, KTH Royal Institute of Technology {\tt\small \{lingfei, kallej\}@kth.se}}
\thanks{$^{3}$ M. Cao is with the Engineering and Technology institute Groningen, University of Groningen, Groningen, 9747 AG Netherlands {\tt\small m.cao@rug.nl}}
\thanks{$^{*}$ The work was supported by the Netherlands Organization for Scientific Research (NWO-Vici-19902), the Knut and Alice Wallenberg Foundation
(Wallenberg Scholar Grant), and the Swedish Research Council (Distinguished
Professor Grant 2017-01078). }
}

\begin{document}

\maketitle

\thispagestyle{empty}
\pagestyle{empty}

\newtheorem{remark}{Remark}
\newtheorem{lemma}{Lemma}
\newtheorem{thm}{Theorem}
\newtheorem{example}{Example}
\newtheorem{definition}{Definition}
\newtheorem{prop}{Proposition}

%%%%%%%%%%%%%%%%%%%%%%%%%%%%%%%%%%%%%%%%%%%%%%%%%%%%%%%%%%%%%%%%%%%%%%%%%%%%%%%%
\begin{abstract}
Many complex systems exhibit interactions that depend not only on pairwise connections, 
but also group structures and memory effects. 
To capture such effects, we develop a unified tensor framework for modeling higher-order Markov chains with memory. 
Our formulation introduces an even-order paired tensor that links folded and unfolded dynamics 
and characterizes their steady states and convergence. We further show that a Markov chain with memory can be approximated by a low-dimensional nonlinear tensor-based system and then provide a full system analysis.
As an application, we define random walks on hypergraphs where memory naturally arises from the hyperedge structure, 
providing new tools for analyzing higher-order networks with time-dependent effects.
\end{abstract}

\begin{keywords}
Markov chains, Hypergraphs, Random walks, Memory effects, Tensors
\end{keywords}

%%%%%%%%%%%%%%%%%%%%%%%%%%%%%%%%%%%%%%%%%%%%%%%%%%%%%%%%%%%%%%%%%%%%%%%%%%%%%%%%
\section{Introduction}

\hjk{
Markov chains are a fundamental tool for modeling stochastic processes in networks \cite{norris1998markov,levin2017markov}, but classical formulations assume memoryless, pairwise interactions.
Many real-world systems violate both assumptions, for instance: biochemical pathways involve simultaneous multi-molecule reactions whose ordering determines downstream behavior \cite{battiston2021physics,alon2019introduction}; coordinated neuronal firing patterns form temporal motifs beyond pairwise description \cite{giusti2016two}; and information diffusion in social networks exhibits both group-wise interactions \cite{cui2025sis,cui2024metzler} and strong temporal correlations \cite{holme2013temporal,fischer2020visual}.
Neglecting such higher-order and non-Markovian structure can lead to misleading characterizations of diffusion, ranking, or control processes \cite{lambiotte2019networks}.
}

\hjk{
Two independent lines address these challenges.
\emph{Markov chains with memory} \cite{wu2017markov} allow transitions to depend on multiple past states, but existing tensor formulations are algebraically cumbersome, with explicit unfolding available only for memory depth two.
\emph{Hypergraph-based random walks} \cite{carletti2020random,hayashi2020hypergraph} capture group interactions but remain memoryless.
No unified framework combines memory with hypergraph structure.
Meanwhile, tensor methods have proven effective for dynamics on hypergraphs \cite{cui2025sis,cui2024metzler,cui2025higher,cui2025analysis} and for multilinear control systems \cite{chen2021multilinear,wang2024algebraic}, motivating a tensor-based unification.
}

In this work, we introduce a \emph{tensor-unfolding framework that unifies higher-order Markov chains with memory and random walks on hypergraphs}. 
The central idea is to represent each memory-driven transition as motion through a directed hyperedge: the ordered tail encodes the sequence of past states, and the head specifies the next state. 
This construction defines a new class of random walks in which the transition probabilities explicitly depend on ordered sequences of past states. By representing memory sequences as ordered hyperedges, the framework retains the generality of higher-order Markov processes while embedding them in the combinatorial setting of hypergraph random walks. 
The outcome is a representation that is both mathematically tractable and intuitively interpretable, highlighting how memory fundamentally shapes diffusion in complex systems.  
\textbf{The main contributions of this paper are threefold:}
%\begin{itemize}
 %   \item We propose a \emph{general tensor-unfolding method} that yields an explicit representation for Markov chains with arbitrary finite memory depth.  
  %  \item We demonstrate that these dynamics can be \emph{approximated by a low-dimensional nonlinear system}, and we analyze its qualitative behavior in detail.  
   % \item We establish a \emph{new class of memory-aware random walks on directed hypergraphs}, providing a natural and physically meaningful model of higher-order diffusion.
%\end{itemize}
    \hma{1. We propose an even-order paired tensor formulation for Markov chains with arbitrary finite memory depth, providing an explicit tensor representation of the unfolded memory process. 
    2. We extend this formulation to continuous-time Markov chains with memory and, under a closure, derive a low-dimensional nonlinear model whose system behaviors are characterized. 
    3. We introduce memory-aware random walks on directed hypergraphs, in which hyperedges encode ordered memory transitions, yielding a natural diffusion model that jointly captures higher-order structure and memory effects.}

Together, these results open a pathway to studying memory effects in network dynamics with tools that are both rigorous and broadly applicable.

\textbf{Notation:} The sets of real and complex numbers are $\mathbb{R}$ and $\mathbb{C}$, respectively. 
The all-ones(zeros) vector is $\mathbf{1}$ ($\mathbf{0}$) and $\|\cdot\|_2$ denotes the Euclidean norm. For any two vectors $a, b \in \mathbb{R}^n$, $a > (<) b$ indicates that $a_i >(<) b_i$, for all $i=1,\ldots,n$. These component-wise comparisons are also valid for matrices or tensors with the same dimension.

%The remainder of the paper is organized as follows.
%Section~\ref{sec:preliminaries} introduces the preliminaries on even-order paired tensors and tensor unfolding.
%Section~\ref{sec:markov_memory} presents the proposed framework for Markov chains with memory and derives theoretical results on convergence and steady states.
%Section~\ref{sec:random_walks} applies this framework to random walks on directed hypergraphs.
%Finally, Section~\ref{sec:conclusion} concludes the paper and outlines future extensions, including non-uniform and multi-agent hypergraph settings.\cite{9778188}

\section{Preliminaries on Tensors and Hypergraphs}
\label{sec:preliminaries}

In this section, we review the basic tensor operations and definitions used throughout the paper,
and describe how hypergraphs can be naturally represented in this framework.

\subsection{General Tensors and Eigenvalues}
A tensor \(T \in \mathbb{C}^{n_1 \times n_2 \times \cdots \times n_k}\) is a multidimensional array. In this paper, all tensors \(T \in \mathbb{R}^{n_1 \times n_2 \times \cdots \times n_k}\) are real except the eigentensor in Definition \ref{def:Uirreducible}. 
The \emph{order} of a tensor is \(k\), representing the number of dimensions, 
where each dimension \(n_i\) (\(i = 1, \ldots, k\)) is referred to as a \emph{mode}. 
If all modes have the same dimension, the tensor is called \emph{cubical} and is denoted as 
\(T \in \mathbb{R}^{n \times n \times \cdots \times n}\). 
A cubical tensor is said to be \emph{supersymmetric} if its entries remain invariant under any permutation of indices.

%A tensor is a multidimensional array that generalizes the notions of scalars (order-0), vectors (order-1), 
%and matrices (order-2). 
%Throughout this paper, we use an order-$m$ tensor ${A} = [a_{i_1 i_2 \cdots i_m}],$
%where each index $i_k$ runs over a finite set of size $n_k$. 
%When all dimensions are equal, i.e., $n_1 = n_2 = \cdots = n_m = n$, 
%the tensor is called \emph{cubical}.
\begin{definition}[Z- and H-eigenpairs {\cite{qi2005eigenvalues}}]
Let ${A} \in \mathbb{R}^{n \times \cdots \times n}$ be a cubical tensor of order $m$. 
For a vector $x \in \mathbb{R}^n$ and a scalar $s\in\mathbb N$, define $x^{[s]} := (x_1^s, x_2^s, \ldots, x_n^s)^\top$. 
A scalar $\lambda \in \mathbb{R}$ (eigenvalue) and a vector $x \in \mathbb{R}^n$ (eigenvector) form an eigenpair of ${A}$ if \hjk{${A} x^{m-1} = \lambda x^{[s]}$,}
% \begin{equation}\label{eq:z_h_eig_def}
% {A} x^{m-1} = \lambda x^{[s]},
% \end{equation}
where
$\big({A} x^{m-1}\big)_i = \sum_{i_2,\ldots,i_m=1}^n A_{i\,i_2 \cdots i_m} x_{i_2} \cdots x_{i_m}.$
Specifically,
 for a \emph{Z-eigenpair}, $s=1$ with the normalization $\|x\|_2 = 1$;
     for an \emph{H-eigenpair}, $s=m-1$ without additional normalization.

\end{definition}

\subsection{Even-Order Paired Tensors and Einstein Product}

We now introduce the notion of an \emph{even-order paired tensor}, which provides a convenient
representation for systems with multiple interacting components and memory structures.

An even-order paired tensor $A \in \mathbb{R}^{I_1 \times J_1 \times \cdots \times I_N \times J_N}$ is a $2N$-th order tensor
whose indices are arranged in $N$ \emph{pairs} $(i_n, j_n)$ for $n = 1,2,\ldots, N$.
For instance, the $(2n-1)$-th mode corresponds to the \emph{$n$-mode row} and the $2n$-th mode
to the \emph{$n$-mode column}.

\begin{definition}[Einstein Product \cite{wang2024algebraic,chen2021multilinear}]
Given two even-order paired tensors
$A \in \mathbb{R}^{I_1 \times J_1 \times \cdots \times I_N \times J_N}$ and
$B \in \mathbb{R}^{J_1 \times K_1 \times \cdots \times J_N \times K_N}$,
their \emph{Einstein product}, denoted by $A \ast B$, is defined as
    $(A \ast B)_{i_1k_1\cdots i_Nk_N} =
    \sum_{j_1=1}^{J_1} \cdots \sum_{j_N=1}^{J_N}
    A_{i_1j_1\cdots i_Nj_N} B_{j_1k_1\cdots j_Nk_N}.$
\end{definition}

The Einstein product generalizes standard matrix multiplication to higher-order settings.
If $Y \in \mathbb{R}^{J_1 \times \cdots \times J_N}$ is a regular $N$-th order tensor,
we can interpret $Y$ as a special even-order paired tensor, in which all mode column dimensions are equal to 1. Then, we define
\[
    (A \ast Y)_{i_1\cdots i_N} =
    \sum_{j_1=1}^{J_1} \cdots \sum_{j_N=1}^{J_N} A_{i_1j_1\cdots i_Nj_N} Y_{j_1\cdots j_N}.
\]

\subsection{Tensor Unfolding}

To simplify computations, it is often convenient to transform a tensor into a matrix or vector. This process is called \emph{Tensor Unfolding}. \hjk{This unfolding relies on the following index map.}
%First of all, we should define a map $\mathrm{ivec}(\cdot)$ that converts a multi-index tensor into a matrix or vector form, which is defined as follows.

\begin{definition}[Index Vectorization Function \cite{wang2024algebraic,chen2021multilinear}]
For a multi-index $i = (i_1, i_2, \ldots, i_N)$ and dimension sizes $I = (I_1, I_2, \ldots, I_N)$,
the mapping $\mathrm{ivec}(i, I)$ flattens the multi-index into a single integer index:
\begin{equation}\label{eq:ivec}
\mathrm{ivec}(i, I) = i_1 + \sum_{k=2}^N (i_k - 1) \prod_{j=1}^{k-1} I_j.
\end{equation}
\end{definition}

Then, the tensor unfolding is defined as follows.

\begin{definition}[Tensor Unfolding \cite{wang2024algebraic,chen2021multilinear}]\label{def:unfolding}
Given an even-order paired tensor $A \in \mathbb{R}^{I_1 \times J_1 \times \cdots \times I_N \times J_N},$
% \[
% A \in \mathbb{R}^{I_1 \times J_1 \times \cdots \times I_N \times J_N},
% \]
%
its \emph{unfolding} is a matrix $\bar{A} = \varphi(A)$ of size $|I| \times |J|$, where
\[
|I| = \prod_{n=1}^N I_n, 
\qquad
|J| = \prod_{n=1}^N J_n,
\]
and with the unfolding map $\varphi$ is explicitly defined by
\begin{equation}\label{eq:unfolding}
A_{i_1 j_1 \cdots i_N j_N}
\quad \xmapsto{\quad \varphi \quad} \quad
\bar{A}_{\mathrm{ivec}(i,I),\,\mathrm{ivec}(j,J)},
\end{equation}
where $i = (i_1, \ldots, i_N)$ and $j = (j_1, \ldots, j_N)$
are the row and column multi-indices, respectively. For simplicity, $\mathrm{ivec}(i,I)=  \mathrm{ivec}(i_1, \ldots, i_N).$
\end{definition}

The Einstein product then simplifies to standard matrix multiplication: \hjk{$\varphi(A \ast B) =\varphi(A) \varphi(B) =\bar{A} \, \bar{B}$.}
% \[
%     \varphi(A \ast B) =\varphi(A) \varphi(B) =\bar{A} \, \bar{B}.
% \]

This unfolding allows us to leverage existing linear algebra techniques for analysis of higher-order time-dependent systems. %The same mapping can be applied to an $N$-th order tensor $Y \in \mathbb{R}^{J_1 \times \cdots \times J_N}$
%by interpreting $Y$ as a special case of an even-order paired tensor with column dimensions equal to one \hjk{This is already mentioned above, try to avoid repetitions}.
%In this case, $\varphi(Y)$ produces a vector $\bar{y}$.

\subsection{U-Eigenvalues and Perron-Frobenius Theorem for Tensors}

We next define another type of higher-order eigenvalues and eigenvectors.

\begin{definition}[U-Eigenvalue  \cite{wang2024algebraic,chen2021multilinear}]
Let $A \in \mathbb{R}^{I_1 \times I_1 \times \cdots \times I_N \times I_N}$ be an even-order paired tensor.
If there exists a non-zero tensor $\Tilde{X} \in \mathbb{C}^{I_1 \times \cdots \times I_N}$ and scalar $\lambda \in \mathbb{C}$
such that
%\begin{equation}
    $A \ast \Tilde{X} = \lambda \Tilde{X},$
%\end{equation}
then $\lambda$ and $\Tilde{X}$ are called the \emph{U-eigenvalue} and \emph{U-eigentensor} of $A$, respectively.
\end{definition}

Consequently, we propose the following.

\begin{definition}[U-Irreducible Tensor]\label{def:Uirreducible}
An even-order paired tensor 
$A \in \mathbb{R}^{I_1 \times I_1 \times \cdots \times I_N \times I_N}$ 
is said to be \emph{U-irreducible} (\emph{U-primitive}) if and only if its unfolded matrix 
$\bar{A} = \varphi(A)$ is \emph{irreducible} (primitive) in the classical sense \cite{FB-LNS}.

\end{definition}

%For nonnegative and irreducible tensors, after applying the unfolding map $\varphi$, the tensor behaves like a standard irreducible nonnegative matrix. Thus, a natural generalization of the Perron-Frobenius theorem holds:

%\hjk{
%The unfolding map $\varphi$ produces a standard nonnegative irreducible matrix for a nonnegative and irreducible tensor. Consequently, the classical Perron–Frobenius theorem applies directly to $\varphi(\mathcal{A})$, yielding a natural extension of Perron–Frobenius theory to the tensor setting:
%}

\begin{lemma}[Perron-Frobenius theorem]
\hjk{By the classical Perron-Frobenius theorem \cite{FB-LNS}, if} $A$ is a nonnegative and U-irreducible even-order paired tensor,
then there exists a unique positive U-eigentensor $X > \mathbf{0}$
associated with the largest real eigenvalue $\lambda > 0$.
Moreover, $\lambda$ is simple and dominates all other eigenvalues in modulus.
\end{lemma}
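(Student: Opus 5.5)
The plan is to reduce everything to the matrix case through the unfolding map $\varphi$ of Definition~\ref{def:unfolding}. Set $\bar A=\varphi(A)\in\mathbb{R}^{|I|\times|I|}$ with $|I|=\prod_{n}I_n$; the row and column multi-index sets coincide because the paired dimensions are equal. Also let $\bar x=\varphi(\Tilde X)\in\mathbb{C}^{|I|}$, viewing $\Tilde X$ as a paired tensor with unit column dimensions.

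\textbf{Step 1: the unfolding is a bijection.} The map $\mathrm{ivec}(\cdot,I)$ in \eqref{eq:ivec} is the mixed-radix (column-major) enumeration of $\{1,\dots,I_1\}\times\cdots\times\{1,\dots,I_N\}$. It is therefore a bijection onto $\{1,\dots,|I|\}$, so $\varphi$ is a linear bijection. It preserves entrywise order and positivity, and it sends nonzero tensors to nonzero vectors.

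\textbf{Step 2: U-eigenpairs are matrix eigenpairs.} Using $\varphi(A\ast B)=\bar A\bar B$, applied with $B=\Tilde X$, we get
\[
A\ast\Tilde X=\lambda\Tilde X \iff \bar A\bar x=\lambda\bar x .
\]
If one prefers not to rely on that identity, it can be checked directly. Expand $(A\ast\Tilde X)_{i_1\cdots i_N}=\sum_{j}A_{i_1j_1\cdots i_Nj_N}\Tilde X_{j_1\cdots j_N}$, reindex the sum through the bijection $j\mapsto\mathrm{ivec}(j,I)$, and the result is $(\bar A\bar x)_{\mathrm{ivec}(i,I)}$. Consequently the U-spectrum of $A$ equals the spectrum of $\bar A$, with the same algebraic multiplicities, since the characteristic polynomial is defined through $\bar A$. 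Eigenvectors correspond one-to-one through $\varphi^{-1}$.

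\textbf{Step 3: apply the classical theorem.} Nonnegativity of $A$ gives $\bar A\ge0$, and U-irreducibility gives that $\bar A$ is irreducible, by Definition~\ref{def:Uirreducible}. The classical Perron--Frobenius theorem \cite{FB-LNS} then yields four facts:
\begin{itemize}
\item the spectral radius $\rho(\bar A)>0$ is a simple eigenvalue;
\item it has a positive eigenvector $\bar x>0$;
\item it is unique up to scaling among nonnegative eigenvectors;
\item $|\mu|\le\rho(\bar A)$ for every eigenvalue $\mu$.
\end{itemize}
Pulling these back with $\varphi^{-1}$ gives $X=\varphi^{-1}(\bar x)>\mathbf 0$ with $A\ast X=\lambda X$, where $\lambda=\rho(\bar A)$ is simple and is the largest real U-eigenvalue. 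Uniqueness of a positive U-eigentensor holds up to normalization, for example $\sum X=1$.

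\textbf{Main obstacle.} The only delicate point is the phrase ``dominates all other eigenvalues in modulus.'' Under irreducibility alone we only get $|\mu|\le\lambda$. Strict domination $|\mu|<\lambda$ requires U-primitivity, meaning $\bar A$ is primitive. I would therefore prove the weak inequality in the irreducible case, and separately obtain strict domination under U-primitivity from the primitive version of Perron--Frobenius. $\lambda>0$ itself is fine except in the degenerate $1\times 1$ zero case, where $|I|=1$ and $\bar A=0$; that case is excluded by convention. The remaining steps are index bookkeeping.
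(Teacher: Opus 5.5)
Your proposal is correct and follows the paper's own (one-line) argument: unfold via $\varphi$, use $\varphi(A\ast B)=\bar A\bar B$ to identify U-eigenpairs of $A$ with eigenpairs of $\bar A$, and apply the classical Perron--Frobenius theorem to the nonnegative irreducible matrix $\bar A$. Your caveats are also right. U-irreducibility alone gives only $|\mu|\le\lambda$, and strict modulus domination needs U-primitivity, which is what the paper assumes in Theorem~\ref{thm:convergence}. Likewise, the positive U-eigentensor is unique only up to scaling.
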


%The above Lemma is the consequence of the well-known Perron-Frobenius theorem of an irreducible nonnegative matrix \cite{FB-LNS}. This result plays a central role in characterizing stationary distributions in later sections.

\subsection{Hypergraphs and their Tensor Representation}

A \emph{hypergraph} \cite{bick2023higher,gallo1993directed} is a generalization of a graph where each hyperedge can connect
more than two nodes simultaneously.
A hypergraph is called \emph{$k$-uniform} if every hyperedge contains exactly $k$ nodes. 
In this paper, we consider directed $k$-uniform hypergraphs, where each hyperedge has a single head and $(k-1)$ ordered tail nodes.
The structure of such a hypergraph can be encoded by an adjacency tensor
$A \in \mathbb{R}^{n \times n \times \cdots \times n}$ ($k$ modes total), where
\[ \small
    A_{i_1 i_2 \cdots i_k} 
    \begin{cases}
    \neq 0, & \text{if there is a hyperedge from } i_2, \ldots, i_k \text{ to } i_1 \\[4pt]
    =0, & \text{otherwise}.
    \end{cases}
\]
The \emph{tail degree tensor} ${D}$ is a diagonal operator defined over all $(k-1)$-tuples of tail nodes:
 $ {D}_{i_2  \cdots i_k } = 
    \sum_{i_1=1}^n A_{i_1 i_2 \cdots i_k}.$
Using this definition, the normalized adjacency tensor is given by
    $\Tilde{A}_{i_1 i_2 \cdots i_k} =
    \frac{A_{i_1 i_2 \cdots i_k}}
         {{D}_{i_2  \cdots i_k }},$
where each fixed tail $(i_2,\ldots,i_k)$ is normalized individually to ensure
    $\sum_{i_1=1}^n \Tilde{A}_{i_1 i_2 \cdots i_k} = 1.$
This construction guarantees that $\Tilde{A}$ is suitable for modeling random walks on hypergraphs (see Section~\ref{sec:random_walks}).

\medskip
In the next section, we build on these definitions to develop a tensor unfolding framework
for Markov chains with memory.

\section{Higher-Order Markov Chains with Memory}
\label{sec:markov_memory}

\hjk{We now develop a unified tensor-based framework for Markov chains with memory, generalizing the formulation of \cite{wu2017markov} to arbitrary memory depth.}

% We first formalize the notion of a Markov chain with memory, which generalizes the classical first-order Markov chain by allowing the next state to depend on a finite sequence of past states rather than only the most recent one.
% Such models have been studied in various contexts, and a notable tensor-based formulation was introduced in \cite{wu2017markov}, which expresses the transition dynamics compactly using higher-order tensors.
% Building on this foundation, we further develop a unified and more general tensor-based representation.

\subsection{Definition and Basic Properties}

Consider a finite state space $\mathcal{V} = \{1, 2, \ldots, n\}$.
A stochastic process $\{X_t\}_{t\geq 0}$ is said to be a \emph{Markov chain
with memory $m$} if the transition probability satisfies
%\begin{equation*}
%\begin{split}
 $   \Pr(X_{t+1} = i_1 \mid X_t = i_2, X_{t-1} = i_3, \ldots, X_{t-m+2} = i_m) 
= p_{i_1 i_2 \cdots i_m},$
%\end{split}
%\end{equation*}
where
\[
p_{i_1 i_2 \cdots i_m} \geq 0, 
\quad \sum_{i_1=1}^n p_{i_1 i_2 \cdots i_m} = 1, 
\quad \forall (i_2,\ldots,i_m) \in \mathcal{V}^{m-1}.
\]
%The quantity $p_{i_1 i_2 \cdots i_m}$ specifies the probability that the system moves to state $i_1$ at time $t+1$, given that its previous $m-1$ states were $i_2,\ldots,i_m$. 
When $m=2$, this reduces to a standard first-order Markov chain.

\subsection{Memory Process as a First-Order Markov Chain}

To analyze this process, it is common to define an \emph{augmented state vector} 
that encodes the last $m-1$ states:
\[
Y_t = (X_t, X_{t-1}, \ldots, X_{t-m+2}) \in \mathcal{V}^{m-1}.
\]
The evolution of $\{Y_t\}$ is then a first-order Markov chain on a state space
of size $n^{m-1}$. Its transition probability from $Y_t = (i_2,\ldots,i_m)$
to $Y_{t+1} = (i_1,\ldots,i_{m-1})$ is exactly $p_{i_1 i_2 \cdots i_m}$.

The joint distribution of the augmented state vector is then a column vector
$\pi_t \in \mathbb{R}^{n^{m-1}}$ satisfying the standard iteration
$\pi_{t+1} = M \pi_t,$
where $M$ is a transition matrix and contains the information of all probabilities $p$'s.
\hjk{
However, constructing $M$ explicitly requires $n^{m-1}\times n^{m-1}$ entries, and its connection to the original transition probabilities remains implicit \cite{wu2017markov}. This motivates the following tensorial representation.
}

% While conceptually straightforward, constructing $M$ explicitly leads to a state-space blow-up of dimension $n^{m-1}$, which quickly becomes computationally infeasible for large $n$ or $m$ \cite{wu2017markov}.
% This stacking approach can be viewed as a \emph{manual unfolding} of the original memory-dependent process into a high-dimensional standard Markov chain.
% However, the resulting transition matrix $M$ is built purely through index manipulations, and its mathematical connection to the original system remains implicit. This motivates us to propose the next tensorial representation.

\subsection{Tensor Representation}

Instead of explicitly constructing the $n^{m-1}$-dimensional transition matrix,
we directly encode the memory process in a higher-order transition tensor.
Define the order-$m$ transition tensor
$P = [p_{i_1 i_2 \cdots i_m}] \in \mathbb{R}^{n \times n \times \cdots \times n},$
where the first index $i_1$ corresponds to the next state and
$(i_2,\ldots,i_m)$ correspond to the ordered history of length $m-1$. Each entry naturally and directly exhibits the physical meaning of the transition.

For the special case $m=2$, $P$ reduces to a standard $n \times n$ stochastic matrix.
For $m>2$, $P$ compactly captures the transition rules without requiring an explicit expansion.

The dynamics of the memory process can now be written directly in terms of $P$.
Let $\Pi_t$ be the joint probability tensor of the last $m-1$ states,
\[
(\Pi_t)_{i_2 \cdots i_m} = \Pr(X_t = i_2, X_{t-1} = i_3, \ldots, X_{t-m+2} = i_m).
\]
Then the update rule for $\Pi_t$ is given by
\begin{equation}\label{eq:Pi_update}
(\Pi_{t+1})_{i_1 \cdots i_{m-1}} = 
\sum_{i_m=1}^n p_{i_1 i_2 \cdots i_m} (\Pi_t)_{i_2 \cdots i_m}.
\end{equation}

Let $x_t \in \mathbb{R}^n$ denote the state distribution at time $t$, i.e.,
\[
(x_t)_i := \Pr(X_t = i), \qquad \sum_{i=1}^n (x_t)_i = 1.
\]
Since $\Pi_t$ collects the joint distribution of the last $m{-}1$ states, $x_t$ is the following sum: \hjk{$(x_t)_i \;=\; \sum_{i_2,\ldots,i_{m-1}=1}^n (\Pi_t)_{\,i\, i_2 \cdots i_{m-1}}$.}
% \begin{equation}\label{eq:xt_marginal}
% (x_t)_i \;=\; \sum_{i_2,\ldots,i_{m-1}=1}^n (\Pi_t)_{\,i\, i_2 \cdots i_{m-1}}.
% \end{equation}

\subsection{Even-Order Paired Tensor Formulation}

While the transition tensor $P$ compactly represents the memory process,
it is still indexed asymmetrically, with the ``next state'' separated from the
``past states''. To reveal the inherent symmetry and enable structured analysis,
we lift $P$ to an even-order paired tensor $\tilde{P}$ of order $2(m-1)$.

Specifically, let $i = (i_1,\ldots,i_{m-1})$ and $j = (j_1,\ldots,j_{m-1})$
be multi-indices representing the ``head'' and ``tail'' of a transition.
We define
\begin{equation}\label{eq:P_paired}
\tilde{P}_{i_1 j_1 i_2 j_2 \cdots i_{m-1} j_{m-1}} =
\begin{cases}
p_{i_1 i_2 \cdots i_m}, & \text{if } j_1 = i_2, j_2 = i_3, \\[4pt]
& \ldots, j_{m-1} = i_m, \\[4pt]
0, & \text{otherwise}.
\end{cases}
\end{equation}

The Einstein product of $\tilde{P}$ with the joint distribution tensor $\Pi_t$
then yields the updated state:
\begin{equation}\label{eq:pit}
\Pi_{t+1} = \tilde{P} \ast \Pi_t.
\end{equation}
By applying the unfolding map $\varphi(\cdot)$ introduced in Section~\ref{sec:preliminaries},
\eqref{eq:pit} becomes a standard linear iteration \hjk{$\varphi(\Pi_{t+1}) = \varphi(\tilde{P}) \, \varphi(\Pi_t)$.}
% \[
% \varphi(\Pi_{t+1}) = \varphi(\tilde{P}) \, \varphi(\Pi_t).
% \]
Thus, the even-order paired tensor $\tilde{P}$ fully characterizes the Markov chain with memory.
\hjk{\begin{remark}
    In \cite{wu2017markov}, an explicit unfolding was given only for $m=2$. Our formulation provides a constructive procedure for any $m\geq2$, simplifying both analysis and computation.
\end{remark}}

% \begin{remark}
% In \cite{wu2017markov}, an explicit unfolding into matrix-vector form
% was given only for the special case $m=2$. 
% Our even-order paired tensor formulation provides a constructive and explicit
% procedure for \emph{any} memory depth $m \geq 2$.
% This unification simplifies both the theoretical analysis and numerical computation,
% and highlights the underlying structure of the process that was previously hidden
% in index manipulations.
% \end{remark}

\subsection{Stationary Distributions and Convergence}

Since $\varphi(\tilde{P})$ is a nonnegative column-stochastic matrix,
the classical Markov Chain theory can be applied directly.
We have the following result.

\begin{thm}[Convergence behavior]
\label{thm:convergence}
Suppose that $\widetilde{P}$ is the extended transition probability tensor of a Markov chain with memory depth $m-1$. 
Assume that $\widetilde{P}$ is \emph{U-primitive}, i.e., its unfolding 
$\varphi(\widetilde{P})$ is a primitive nonnegative column-stochastic matrix. 
Then, for any generic initial joint memory distribution $\Pi_0 = \Pi_{0,-1,\ldots,-m+1}$ ($0,-1,\ldots,-m+1$ denotes a virtual time instant denoting the initial sequence history), the following statements hold:
\begin{enumerate}
    \item \textbf{Perron-U-eigenvalue:} The tensor $\widetilde{P}$ has a unique dominant U-eigenvalue 
    $\lambda_1(\widetilde{P}) = 1$, which is simple and strictly greater in modulus than all other U-eigenvalues. 
    Its corresponding U-eigentensor $\widetilde{\Pi} > \mathbf{0}$ is strictly positive.

    \item \textbf{Convergence:} The sequence of joint probability mass functions $\{\Pi_t\}$ generated by the update rule \eqref{eq:Pi_update} converges to the unique limit $\widetilde{\Pi}$\hjk{, that is $\lim_{t \to \infty} \Pi_t = \widetilde{\Pi}.$}
  
    \item \textbf{Stationary Distribution:} The stationary distribution $\tilde{x}$ of the original Markov chain with memory $m$ exists and is given by the marginal sum of $\widetilde{\Pi}$:
    \begin{equation}\label{eq:xsum}
        \tilde{x}_i = \sum_{i_2, \ldots, i_{m-1}} \widetilde{\Pi}_{i\, i_2 \cdots i_{m-1}}, 
        \quad \forall i \in \mathcal{V}.
    \end{equation}

    \item \textbf{Rate of Convergence:} The asymptotic convergence rate is determined by the modulus of the second largest U-eigenvalue of $\widetilde{P}$.
\end{enumerate}
\end{thm}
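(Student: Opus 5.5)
The plan is to reduce every claim to the classical theory of primitive column-stochastic matrices via the unfolding map $\varphi$. The key fact is that $\varphi$ is a bijective linear isometry between $\mathbb{R}^{n\times\cdots\times n}$ (order $m-1$) and $\mathbb{R}^{n^{m-1}}$, and that it intertwines the Einstein product with matrix multiplication: $\varphi(\widetilde{P}\ast \Pi)=\varphi(\widetilde{P})\varphi(\Pi)$. Consequently $\widetilde{P}\ast\widetilde{X}=\lambda\widetilde{X}$ holds if and only if $\varphi(\widetilde{P})\varphi(\widetilde{X})=\lambda\varphi(\widetilde{X})$. So the U-spectrum of $\widetilde{P}$, with multiplicities, coincides with the spectrum of $M:=\varphi(\widetilde{P})$, and U-eigentensors correspond bijectively to eigenvectors. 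Positivity is preserved entrywise under $\varphi$.

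First I verify that $M$ is column-stochastic. By \eqref{eq:P_paired}, the column of $M$ indexed by $j=(j_1,\dots,j_{m-1})$ has nonzero entries only at rows $i=(i_1,j_1,\dots,j_{m-2})$. Its column sum is therefore $\sum_{i_1}p_{i_1 j_1\cdots j_{m-1}}=1$.

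\textbf{Item 1.} Since $\mathbf{1}^\top M=\mathbf{1}^\top$, the value $1$ is an eigenvalue. Because $M$ is nonnegative with all column sums equal to $1$, we have $\rho(M)\le\|M\|_1=1$, so $\rho(M)=1$. Primitivity gives irreducibility, so the Perron--Frobenius Lemma applies: the eigenvalue $1$ is simple and has a positive eigenvector, which we normalize to sum $1$. Primitivity further gives strict dominance, $|\lambda|<1$ for every other eigenvalue. The Lemma as stated only asserts dominance, so I invoke the classical primitive version here. Pulling everything back through $\varphi^{-1}$ yields $\widetilde{\Pi}>\mathbf 0$.

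\textbf{Items 2 and 4.} I use the Jordan decomposition $M=\tilde\pi\mathbf{1}^\top+R$, where $\tilde\pi=\varphi(\widetilde{\Pi})$ and $R$ is the part acting on the complementary invariant subspace. This gives $\rho(R)=|\lambda_2|<1$ and $(\tilde\pi\mathbf{1}^\top)R=R(\tilde\pi\mathbf{1}^\top)=0$. Hence $M^t=\tilde\pi\mathbf 1^\top+R^t$ and $\|R^t\|\le C_\varepsilon(|\lambda_2|+\varepsilon)^t$. Any probability tensor $\Pi_0$ satisfies $\mathbf 1^\top\varphi(\Pi_0)=1$, so $\varphi(\Pi_t)=M^t\varphi(\Pi_0)\to\tilde\pi$. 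In fact $\limsup\|\Pi_t-\widetilde\Pi\|^{1/t}\le|\lambda_2|$, with equality when $\varphi(\Pi_0)$ has a nonzero component along the $\lambda_2$ generalized eigenspace. That is the role of ``generic'', and it gives item 4. Continuity of $\varphi^{-1}$ gives $\Pi_t\to\widetilde\Pi$.

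\textbf{Item 3.} The marginal map $\Pi\mapsto x$, with $x_i=\sum_{i_2,\dots,i_{m-1}}\Pi_{i i_2\cdots i_{m-1}}$, is linear and hence continuous. Therefore $x_t\to\tilde x$, where $\tilde x$ is given by \eqref{eq:xsum}. Stationarity needs a short check. If $\Pi_t=\widetilde\Pi$, then $\Pi_{t+1}=\widetilde\Pi$ by item 1, so $x_{t+1}=\tilde x$. Moreover $\tilde x\ge 0$ and $\sum_i\tilde x_i=1$, so $\tilde x$ is a stationary distribution of the chain with memory.

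The main obstacle is bookkeeping rather than analysis. It consists of confirming that the intertwining identity for $\varphi$ holds with the column-dimension-one convention for $\Pi$, and that the ivec orderings of the row and column indices match, so that the U-eigenstructure is exactly the matrix eigenstructure. The only spectral subtlety is making precise the sense of ``generic'' needed for the rate statement in item 4.
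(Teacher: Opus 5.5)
Your proposal is correct and follows the same route as the paper, which proves the theorem only by noting that $\varphi(\widetilde{P})$ is a nonnegative column-stochastic matrix and then applying classical Markov chain (Perron--Frobenius) theory through the identity $\varphi(\widetilde{P}\ast\Pi)=\varphi(\widetilde{P})\varphi(\Pi)$. You also supply details the paper leaves implicit: the column-sum check, $\rho(\varphi(\widetilde{P}))=1$, the need for the primitive (not merely irreducible) Perron--Frobenius theorem to get strict dominance, and a precise meaning of ``generic'' for the rate statement.
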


This establishes both the existence and uniqueness of the steady-state behavior for Markov chains with memory in the unified tensor framework and is an extension to the result of  \cite[Lemma 3.2]{wu2017markov}. Firstly, \cite[Lemma 3.2]{wu2017markov} is a special case of $m=2$. Secondly, the condition is not directly defined on the tensor $\widetilde{P}$ but a constructed matrix related to $\widetilde{P}$, which restricts its applicability compared with our theorem.
\hjk{
\begin{remark}\label{rem:2}
    In  \cite{wu2017markov}, a mean-field closure $\Pi \approx x^{\otimes(m-1)}=x\otimes x \otimes \cdots$ was introduced for the discrete-time case, reducing the update rule \eqref{eq:pit} to a nonlinear iteration $z^+=P\,z^{m-1}$. The steady-state equation $z^*=P\,(z^*)^{m-1}$ is then a Z-eigenvalue problem with eigenvalue $1$, and the computational cost per step drops from $O(n^{2(m-1)})$ to $O(n^m)$.  We develop the continuous-time analogue of this approximation in Section \ref{subsec:continuous_time}.
\end{remark}
}

% \begin{remark}\label{rem:2}
% Wu and Chu \cite{wu2017markov} further analyzed discrete-time higher-order Markov chains by introducing a mean-field closure, where the joint memory distribution is approximated as a rank-one Kronecker power of the state vector, i.e., 
% $\Pi \approx x^{\otimes(m-1)}=x\otimes x \otimes \cdots$, $\otimes$ is the Kronecker product (slightly modified tensor version, see \cite{wu2017markov}). 
% Substituting this into the discrete update rule
% $\Pi_{t+1} = P \ast \Pi_t,$
% where $P$ is the transition tensor, collapses the high-dimensional linear system to a nonlinear discrete-time system for the marginal vector $x$:
% %\begin{equation}\label{eq:z_eig_eq}
% $z^+=P\,z^{m-1},$ and the steady-state equation is then $z^*=P\,(z^*)^{m-1}$
% %\end{equation}
% which is the \emph{Z-eigenvalue problem} with Z-eigenvalue $1$. 
% This method for computing steady distribution avoids explicitly forming the $n^{m-1}\!\times n^{m-1}$ unfolded matrix, reducing the computational cost per step from $O(n^{2(m-1)})$ to $O(n^m)$, which is particularly efficient when $m$ is small and $n$ is large. 
% However, it only yields the marginal stationary distribution $\widetilde{x}$ rather than the full joint distribution $\widetilde{\Pi}$, and the intermediate updates of $z$ do not represent the actual trajectory of $x(t)$. 
% \end{remark}

\subsection{Continuous-Time Markov Chains with Memory}
\label{subsec:continuous_time}

In this subsection, we extend our tensor-based formulation to continuous-time Markov chains with memory, whose asymptotic behavior is rarely studied in literature but will be discovered in this subsection.

\subsubsection{Flow Rate Tensor}

Consider a continuous-time process $\{X_t\}_{t\ge 0}$ on $\mathcal{V}=\{1,\dots,n\}$ with memory depth $m$.
We introduce a nonnegative \emph{inflow rate tensor}
\[
\mathcal{R}=[r_{i_1 i_2 \cdots i_m}] \in \mathbb{R}_{\ge 0}^{n \times \cdots \times n},
\]
where $r_{i_1 i_2 \cdots i_m}$ is the instantaneous rate of moving from the ordered history
$(i_2,\ldots,i_m)$ to the new state $i_1$.
For each fixed history $(i_2,\ldots,i_m)$, define the total outflow rate
\[
\rho_{i_2 \cdots i_m} \;:=\; \sum_{j=1}^n r_{j\, i_2 \cdots i_m}\,.
\]

We then build two even-order paired tensors:

\paragraph{ Paired inflow operator.}
\[
\widetilde{\mathcal{R}}_{\,i_1 j_1\, i_2 j_2\, \cdots\, i_{m-1} j_{m-1}}
=
\begin{cases}
r_{i_1 i_2 \cdots i_m}, & j_1=i_2,\; j_2=i_3,,\\
& \; \ldots,\; j_{m-1}=i_m \\
0, & \text{otherwise.}
\end{cases}
\]

\paragraph{ Diagonal outflow operator} All entries are zeros except, 
$\mathcal{D}_{\,j_1 j_1\, j_2 j_2\, \cdots\, j_{m-1} j_{m-1}}
\;=\;
\rho_{\,j_1 \cdots j_{m-1}}\,.$

The continuous-time rate tensor on the paired space is then defined by
$\;\widetilde{\mathcal{Q}} \;:=\; \widetilde{\mathcal{R}} \;-\; \mathcal{D}\;,$
which guarantees probability conservation (column sums zero) after unfolding.

\subsubsection{Kolmogorov Forward Equation}

Let $\Pi(t)$ be the joint probability tensor of the most recent $m-1$ states.
Its time evolution satisfies
\begin{equation}\label{eq:master_equation}
\frac{d}{dt}\,\Pi(t) \;=\; \widetilde{\mathcal{Q}} \ast \Pi(t)
\;=\; \widetilde{\mathcal{R}} \ast \Pi(t) \;-\; \mathcal{D}  \ast \Pi(t).
\end{equation}
Under unfolding, \eqref{eq:master_equation} becomes the classical continuous-time Markov chain \cite{norris1998markov}: \hjk{$\frac{d}{dt}\,\varphi(\Pi(t))
\;=\;
\left(\,\varphi(\widetilde{\mathcal{R}})\;-\;\varphi({\mathcal{D}})\,\right)\;\varphi(\Pi(t))$.}
% \[
% \frac{d}{dt}\,\varphi(\Pi(t))
% \;=\;
% \big(\,\varphi(\widetilde{\mathcal{R}})\;-\;\varphi({\mathcal{D}})\,\big)\;\varphi(\Pi(t)).
% \]

\subsubsection{Stationary Behavior and Transients}

If $\widetilde{\mathcal{Q}}$ is U-irreducible, there exists a unique stationary joint distribution
$\widetilde{\Pi}$ satisfying
\[
\widetilde{\mathcal{Q}} \ast \widetilde{\Pi} \;=\; 0,
\qquad
\tilde{x}_i
=\sum_{i_2,\ldots,i_{m-1}} \widetilde{\Pi}_{i\, i_2 \cdots i_{m-1}},
\ \forall i\in\mathcal{V}.
\]
The spectrum of $\widetilde{\mathcal{Q}}$ governs the dynamics:
 the U-eigenvector associated with the zero U-eigenvalue corresponds to $\widetilde{\Pi}$, while all other U-eigenvalues have
negative real parts and determine the decay rates.

For continuous-time systems, similarly to the idea introduced in Remark \ref{rem:2}, we apply the same approximation 
$\Pi \approx x^{\otimes(m-1)}$ into the equation \eqref{eq:master_equation}
leading to
\begin{equation}\label{eq:lap}
    \dot{x} = \mathcal{R} x^{m-1} - F x^{m-1},
\end{equation}
where $F$ has the same dimension as $\mathcal{R}$ and all entries are zero except 
$F_{i i i_2\ldots i_{m-1}} = \rho_{i i_2 \cdots i_{m-1}}$. 
Define the Laplacian $L = F - \mathcal{R}$. When $n$ is sufficiently large, the assumption $\Pi \approx x^{\otimes(m-1)}$ becomes asymptotically exact, analogous to the approximation techniques used in \cite{cui2025sis}, and corresponds to the classical \emph{propagation of chaos} phenomenon \cite{sznitman2006topics}. \hma{This closure neglects the covariances among the states and is therefore based on an independence assumption between different components.}
At steady state, it yields 
$L\,(x^*)^{m-1} = \mathbf{0}$, i.e., determining whether zero is an \emph{H-eigenvalue} of $L$. 
For continuous-time processes, the approximation naturally yields 
an H-eigenvalue problem, while the Z-eigenvalue formulation is specific to the discrete-time setting.
\hjk{
\begin{remark}
    This reduces a linear system of dimension $n^{m-1}$ to a nonlinear system of dimension $n$. Note that the stationary distribution is an H-eigenvector of $L$ with a zero H-eigenvalue, similar to the classical case.
\end{remark}
}

% \begin{remark}
% This is a considerable model reduction technique for a large $m$ given a large $n$. It starts from a high-dimensional linear Markov process with memory (dimension $n^{(m-1)}$) and approximates 
% it by a low-dimensional nonlinear system (dimension $n$). It is also worth mentioning that the stationary distribution is an H-eigenvector associated with a zero H-eigenvalue of $L$, similar to the classical case.
% \end{remark}

In the following, we provide a detailed analysis of \eqref{eq:lap}.

\begin{lemma}[Positivity]\label{thm:pos}
    The system \eqref{eq:lap} is a positive system, i.e. if $x(0) \geq 0$, then $x(t) \geq 0$ for all $t \geq 0$. Furthermore, $\mathbf{1}^{\top} x(0)=M$ is a \hjk{conserved} quantity of the system.
\end{lemma}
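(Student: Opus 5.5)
The plan is to write \eqref{eq:lap} componentwise and read off both claims from its structure. By the definition of $F$, the only nonzero entries in row $i$ are $F_{i\,i\,i_2\cdots i_{m-1}}=\rho_{i\,i_2\cdots i_{m-1}}$. Hence
\[
\dot x_i \;=\; \underbrace{\sum_{i_2,\ldots,i_m} r_{i\,i_2\cdots i_m}\,x_{i_2}\cdots x_{i_m}}_{=:g_i(x)} \;-\; x_i\,\underbrace{\sum_{i_2,\ldots,i_{m-1}} \rho_{i\,i_2\cdots i_{m-1}}\,x_{i_2}\cdots x_{i_{m-1}}}_{=:h_i(x)} .
\]
Both $g_i$ and $h_i$ are polynomials with nonnegative coefficients, so they are nonnegative on the orthant $\mathbb{R}^n_{\ge 0}$. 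The decisive feature is that the outflow term carries an explicit factor $x_i$.

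\textbf{Conservation.} This is a direct index computation. Summing the inflow term over $i$ gives
\[
\mathbf 1^\top \mathcal R x^{m-1}=\sum_{i_2,\ldots,i_m}\Big(\sum_{i} r_{i\,i_2\cdots i_m}\Big)x_{i_2}\cdots x_{i_m}=\sum_{i_2,\ldots,i_m}\rho_{i_2\cdots i_m}x_{i_2}\cdots x_{i_m}.
\]
Summing the outflow term gives
\[
\mathbf 1^\top F x^{m-1}=\sum_{i,i_2,\ldots,i_{m-1}}\rho_{i\,i_2\cdots i_{m-1}}\,x_i x_{i_2}\cdots x_{i_{m-1}}.
\]
After renaming $(i,i_2,\ldots,i_{m-1})\to(i_2,\ldots,i_m)$ this is the same sum. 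Therefore $\frac{d}{dt}\mathbf 1^\top x=0$ along any solution, and $\mathbf 1^\top x(t)=\mathbf 1^\top x(0)=M$. This identity holds for all $x\in\mathbb{R}^n$, not only in the orthant.

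\textbf{Positivity.} The vector field is polynomial, hence locally Lipschitz, so a unique local solution exists. On the boundary face $\{x\ge 0,\ x_i=0\}$ we have $\dot x_i=g_i(x)\ge 0$, so the field is quasi-positive. Forward invariance of $\mathbb{R}^n_{\ge0}$ then follows from the standard tangency (Nagumo) criterion. To avoid citing it, I would argue via perturbation:
\begin{itemize}
\item Consider $\dot x = \mathcal R x^{m-1}-Fx^{m-1}+\varepsilon\mathbf 1$ with $\varepsilon>0$ and $x(0)\ge0$.
\item At a first time $t^*$ where some $x_i$ hits $0$ while all other components are still $\ge0$, we get $\dot x_i(t^*)\ge \varepsilon>0$. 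This contradicts $x_i$ decreasing to $0$ at $t^*$, so the perturbed solution stays in the orthant.
\item Letting $\varepsilon\to0$ and using continuous dependence on parameters on compact time intervals gives $x(t)\ge0$ for the original system.
\end{itemize}
An alternative once nonnegativity is known is the variation-of-constants formula
\[
x_i(t)=e^{-\int_0^t h_i}\,x_i(0)+\int_0^t e^{-\int_s^t h_i}\,g_i(x(s))\,ds \;\ge\; 0 .
\]
This is circular on its own, which is why the perturbation step is needed.

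\textbf{Global existence.} Finally, nonnegativity combined with $\mathbf 1^\top x(t)=M$ confines the trajectory to the compact simplex $\{x\ge0,\ \mathbf 1^\top x=M\}$. So no finite-time blow-up can occur, and the solution exists and stays nonnegative for all $t\ge0$.

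The main obstacle is the rigorous invariance step, specifically handling the case where several components hit zero simultaneously. The $\varepsilon$-perturbation plus continuous dependence handles this cleanly. The conservation identity is just careful index bookkeeping, in particular checking that the $F$ term has exactly $m-1$ factors of $x$, matching $\rho$'s $m-1$ indices.
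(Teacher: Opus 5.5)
Your proof is correct and follows the same route as the paper. Invariance of the orthant comes from the sign condition $\dot x_i = g_i(x)\ge 0$ on the face $x_i=0$, which holds because of the explicit factor $x_i$ in $(Fx^{m-1})_i$; conservation comes from summing over $i$, which gives $\tfrac{d}{dt}\mathbf 1^\top x=0$. You additionally make rigorous what the paper's two-line proof leaves implicit: the index bookkeeping showing $\mathbf 1^\top\mathcal R x^{m-1}=\mathbf 1^\top F x^{m-1}$, the $\varepsilon$-perturbation argument (which also covers several components hitting zero at once), and global existence via confinement to the simplex $\{x\ge 0,\ \mathbf 1^\top x=M\}$.
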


\begin{proof}
    At the boundary, when $x_i=0$, we have $\dot{x}_i\geq 0$ due to the structure of $F$, \hjk{showing the first statement}. \hjk{In addition},
    $\frac{d}{d t} \mathbf{1}^{\top} x=\mathbf{1}^{\top}\left(\mathcal{R} x^{m-1}-F x^{m-1}\right)=0$\hjk{; showing the second.} 
    %Thus, $\mathbf{1}^{\top} x(0)=M$ is a conservative quantity.
\end{proof}

As mentioned, $L\,(x^*)^{m-1} = \mathbf{0}$ yields equilibra of \eqref{eq:lap}. Next, we further focus on a special case, and we are able to give further analytical results.

\begin{lemma}[Detailed-balance equilibrium]
    Consider the system \eqref{eq:lap}. Let $I=\{i_3,\cdots, i_m\}$. If there exists $x^{*}>\mathbf{0}$ such that the tensor $\mathcal{R}$ satisfies
\begin{equation}\label{eq:xstar}
    r_{i_1 i_2 I} x_{i_2}^{*}=r_{i_2 i_1 I} x_{i_1}^{*}, \quad \forall i_1, i_2, I ;
\end{equation}
then $L\,(x^*)^{m-1} = \mathbf{0}$. Thus, $x^{*}$ is a positive equilibrium of \eqref{eq:lap}.
\end{lemma}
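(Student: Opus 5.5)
The plan is to verify $L\,(x^*)^{m-1}=\mathbf{0}$ componentwise. I will write out $(\mathcal{R}x^{m-1})_{i}$ and $(Fx^{m-1})_{i}$ explicitly as multi-sums, and then show that the detailed-balance condition \eqref{eq:xstar} matches the two sums term by term. No spectral or Perron--Frobenius machinery is needed; the argument is pure index bookkeeping.

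\emph{Step 1 (inflow term).} For a fixed component $i_1$, write $I=(i_3,\ldots,i_m)$ and $x_I^{*}:=x^*_{i_3}\cdots x^*_{i_m}$. By the definition of the tensor--vector product,
\[
(\mathcal{R}(x^*)^{m-1})_{i_1}=\sum_{i_2}\sum_{I} r_{i_1 i_2 I}\, x^*_{i_2}\, x^*_I .
\]

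\emph{Step 2 (outflow term).} $F$ is nonzero only when its first two indices coincide, with $F_{i_1 i_1 I}=\rho_{i_1 I}=\sum_{j} r_{j\, i_1 I}$. Hence
\[
(F(x^*)^{m-1})_{i_1}=\sum_{I}\rho_{i_1 I}\, x^*_{i_1} x^*_I=\sum_{i_2}\sum_{I} r_{i_2 i_1 I}\, x^*_{i_1}\, x^*_I ,
\]
where the outflow sum has been relabelled with dummy index $i_2$. This relabelling is the one point requiring care. The index placement of $F$ is stated somewhat loosely in the paper, so I will first fix the reading that $F$ occupies the diagonal in its first two modes and carries the remaining history $I$. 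That reading is the one consistent with $\mathcal{D}$ under the closure $\Pi\approx x^{\otimes(m-1)}$ and with the conservation identity $\mathbf{1}^\top(\mathcal{R}x^{m-1}-Fx^{m-1})=0$ in Lemma~\ref{thm:pos}.

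\emph{Step 3 (termwise cancellation).} Subtracting the two expressions gives
\[
(L(x^*)^{m-1})_{i_1}=\sum_{i_2,I}\bigl(r_{i_2 i_1 I}x^*_{i_1}-r_{i_1 i_2 I}x^*_{i_2}\bigr)x^*_I .
\]
Each bracket vanishes by \eqref{eq:xstar}, so every component is zero. Therefore $L(x^*)^{m-1}=\mathbf{0}$, and $x^*$ is an equilibrium of \eqref{eq:lap}. It is positive by assumption, and after rescaling it can be placed on any level set $\mathbf{1}^\top x=M$. This rescaling is legitimate because \eqref{eq:xstar} is homogeneous in $x^*$ and $L x^{m-1}$ is homogeneous of degree $m-1$.

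I expect Step 2 to be the only real obstacle, namely matching the diagonal structure of $F$ to the correct summation index. Once that is settled, the cancellation in Step 3 is immediate.
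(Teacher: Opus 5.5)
Your proposal is correct and matches the paper's proof: both expand $(\mathcal{R}(x^*)^{m-1})_{i_1}$ and $(F(x^*)^{m-1})_{i_1}$ componentwise, rewrite the outflow term using $\rho_{i_1 I}=\sum_j r_{j\, i_1 I}$ with the dummy index relabelled to $i_2$, and cancel term by term using \eqref{eq:xstar}. Your extra remarks (fixing the index placement of $F$, and rescaling $x^*$ to any level set $\mathbf{1}^\top x=M$ by homogeneity) are correct but not needed for the lemma.
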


\begin{proof}
    By definition,
$
\left(\mathcal{R}\left(x^{*}\right)^{m-1}\right)_{i_1}=\sum_{i_2, I} r_{i_1 i_2 I} x_{i_2}^{*} \prod_{k=3}^m x_{i_k}^{*} .
$ Using $\rho_{i_1 I}=\sum_j r_{j i_1 I}$ and the form of $F$,
$
\left(F\left(x^{*}\right)^{m-1}\right)_{i_1}=\sum_{i_2, I} r_{i_2 i_1 I} x_{i_1}^{*} \prod_{k=3}^m x_{i_k}^{*}.
$ 
Thus, it directly yields $\left(\mathcal{R} x^{m-1}-F x^{m-1}\right)=\mathbf{0}$.
\end{proof}

\begin{remark} 
A direct observation is that a supersymmetric $\mathcal{R}$ directly implies \eqref{eq:xstar} with $x^*=\mathbf{1}$. \hjk{In fact,}  \eqref{eq:xstar} represents a higher-order detailed-balance relation: 
for every pair of interactions $(i_1,i_2,I)$, the inflow and outflow fluxes are equal at the equilibrium $x^{*}$\hjk{; hence enabling convergence to the equilibrium, see Theorem \ref{thm:global}.} 
\end{remark}

Next, we introduce the concept of \emph{interaction graph}, which can be considered as a kind of \emph{projected graph} of a hypergraph. Define a directed graph
$
\mathcal{G}_R=(\mathcal{V}, \mathcal{E})
$
where the vertex set is $\mathcal{V}=\{1, \ldots, n\}$;
and the edge set:
$
\mathcal{E}:=\left\{\left(i_2, i_1\right) \mid \exists I=\left(i_3, \ldots, i_m\right) \text { s.t. } r_{i_1 i_2 I} >0\right\} .
$ Equivalently, there is a directed edge $i_2 \rightarrow i_1$ if node $i_1$ can directly receive positive inflow from node $i_2$ for at least one configuration of the other indices. Now, we can characterize the convergence behavior of the system \eqref{eq:lap}.

\begin{thm}[Global convergence]\label{thm:global}
    Consider the system \eqref{eq:lap}. If there exists an $x^{*}>\mathbf{0}$ satisfying \eqref{eq:xstar} and the corresponding $\mathcal{G}_R$ based on the tensor $\mathcal{R}$ is strongly connected, then the trajectory from $x(0)$ converges to $\alpha x^{*}>\mathbf{0}$ with $\alpha=\frac{M}{\mathbf{1}^{\top} {x}^{*}}$ and $M=\mathbf{1}^{\top} x(0)$.
\end{thm}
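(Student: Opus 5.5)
The plan is to rewrite \eqref{eq:lap} as a consensus-type Laplacian flow with state-dependent but \emph{symmetric} weights, and then use a quadratic Lyapunov function together with LaSalle's invariance principle. Define the state-dependent coefficients
\[
a_{ij}(x) := \sum_{I} r_{i j I} \prod_{k=3}^m x_{i_k} \;\ge 0 .
\]
As in the proof of the detailed-balance lemma, we can write $\dot x_i = \sum_j \big(a_{ij}(x)\,x_j - a_{ji}(x)\,x_i\big)$. Multiplying \eqref{eq:xstar} by $\prod_{k\ge3} x_{i_k}$ and summing over $I$ gives $a_{ij}(x)\,x_j^* = a_{ji}(x)\,x_i^*$ for every $x\ge \mathbf 0$. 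Setting $y_i := x_i/x_i^*$ and $w_{ij}(x) := a_{ij}(x)\,x_j^*$, we get $w_{ij}=w_{ji}$ and
\[
\dot x_i=\sum_j w_{ij}(x)\,(y_j-y_i),
\]
which is a symmetric weighted Laplacian flow in the variable $y$.

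Next I take $V(x)=\tfrac12\sum_i x_i^2/x_i^*=\tfrac12\sum_i x_i^* y_i^2$. Then
\[
\dot V=\sum_i y_i\dot x_i=\sum_{i,j}w_{ij}\,y_i(y_j-y_i)=-\tfrac12\sum_{i,j}w_{ij}\,(y_i-y_j)^2\le 0 .
\]
By Lemma~\ref{thm:pos}, the set $\Omega_M=\{x\ge\mathbf 0:\mathbf 1^\top x=M\}$ is compact and forward invariant, so LaSalle's principle applies on $\Omega_M$. Trajectories therefore converge to the largest invariant subset of $\{\dot V=0\}=\{y_i=y_j \text{ whenever } w_{ij}(x)>0\}$. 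At a point $x>\mathbf 0$, every product $\prod_k x_{i_k}$ is positive, so $w_{ij}>0$ exactly on the edges of $\mathcal G_R$ (taken in either orientation). Strong connectivity then forces all $y_i$ to be equal, i.e.\ $x=\alpha x^*$. The conservation law $\mathbf 1^\top x=M$ fixes $\alpha = M/\mathbf 1^\top x^*$, which is the claimed limit.

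The main obstacle is the boundary of the orthant. When some $x_k=0$, the weights $w_{ij}(x)$ that involve $x_k$ through the memory indices $I$ vanish, the effective graph may become disconnected, and $\{\dot V=0\}$ may contain spurious boundary points. For example, when $m\ge3$, a state such as $x=e_1$ can kill most of the products. I would first try to show that the interior of $\Omega_M$ is reached immediately, namely that $x(t)>\mathbf 0$ for all $t>0$ whenever $x(0)\ne\mathbf 0$. The argument would be:
\begin{itemize}
\item If $x_i=0$ and some $a_{ij}(x)\,x_j>0$, then $\dot x_i>0$, so $x_i$ becomes positive at once.
\item Propagating this along the edges of $\mathcal G_R$ would make every component positive.
\end{itemize}
If that propagation fails because some $\prod x_{i_k}$ remains zero, I would instead restrict the statement to $x(0)>\mathbf 0$. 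In that case I would show that the interior is forward invariant: near a face, the outflow term $-a_{ji}x_i$ is proportional to $x_i$, which gives a Grönwall lower bound $x_i(t)\ge x_i(0)e^{-ct}$. I would also show that $V$ together with the strict decrease excludes boundary $\omega$-limit points, for instance by checking that the boundary points with $\dot V=0$ are not invariant, or by replacing $V$ with the relative entropy $\sum_i x_i\ln(x_i/x_i^*)$, whose derivative behaves better near the boundary.
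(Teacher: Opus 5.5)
Your reformulation is correct, and it takes a different route from the paper. The paper uses the relative entropy $\sum_i[x_i\ln(x_i/x_i^*)-x_i+x_i^*]$ and symmetrizes the flux sum directly. Your symmetric weights $w_{ij}=w_{ji}$ instead turn \eqref{eq:lap} into the consensus flow $x_i^*\dot y_i=\sum_j w_{ij}(x)(y_j-y_i)$. That lets you use a quadratic $V$ which is $C^1$ on the closed orthant, so LaSalle on $\Omega_M$ is legitimate there; the paper's $\dot V$, built from $\ln(b/a)$, is undefined as soon as some $x_k=0$. Both arguments, however, only characterize the \emph{interior} part of $\{\dot V=0\}$. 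The step you leave open, excluding boundary $\omega$-limit points, is a genuine gap, and none of your suggested fixes closes it. The paper's proof passes over the same point.

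\textbf{Your first idea fails, and the statement is false for boundary data.} Immediate entry into the interior does not hold in general. Take $n=2$, $m=3$, $r_{122}=r_{212}=1$ and all other entries $0$. Then \eqref{eq:xstar} holds with $x^*=\mathbf{1}$ and $\mathcal{G}_R$ is strongly connected. Yet $\dot x_1=x_2(x_2-x_1)=-\dot x_2$, so $e_1$ is an equilibrium and the trajectory from $x(0)=e_1$ does not converge to $(1/2,1/2)$. Restricting to $x(0)>\mathbf{0}$ is therefore necessary, not a fallback.

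\textbf{Your fallback for $x(0)>\mathbf{0}$ does not work as proposed.} The Gr\"onwall bound $x_i(t)\ge x_i(0)e^{-ct}$ decays, so it does not keep the $\omega$-limit set off the boundary. Showing that boundary points with $\dot V=0$ are not invariant cannot succeed: $\dot V=0$ forces every term $w_{ij}(y_j-y_i)$ to vanish, so every point of $\{\dot V=0\}$ is an equilibrium. The relative entropy does not rule out boundary limit points by itself either; for detailed-balanced mass-action systems such as \eqref{eq:lap}, this persistence question is in general the hard part.

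\textbf{The missing idea is already in your consensus form.} If $i$ attains $\min_k y_k$, then $x_i^*\dot y_i=\sum_j w_{ij}(x)(y_j-y_i)\ge 0$, so $\min_k y_k(t)$ is nondecreasing. Hence for $x(0)>\mathbf{0}$ you get $x_i(t)\ge x_i^*\min_k\bigl(x_k(0)/x_k^*\bigr)>0$ for all $t\ge 0$. The trajectory then stays in a compact forward-invariant subset of the open orthant on which every product $\prod_{k\ge 3}x_{i_k}$ is bounded away from zero. On that set, your LaSalle argument goes through and yields $x(t)\to\alpha x^*$ with $\alpha=M/\mathbf{1}^\top x^*$.
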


\begin{proof}
    First, we need to define a useful notation of ``flux": $\Phi_{i_1 i_2 I}(x):=r_{i_1 i_2 I} x_{i_2} \prod_{k=3}^m x_{i_k} \geq 0$. Then, component-wise, \eqref{eq:lap} becomes 
    % \hjk{$\dot{x}_{i_1}=\sum_{i_2, I} \Phi_{i_1 i_2 I} (x)-\sum_{i_2, I} \Phi_{i_2 i_1 I} (x).$}

    \begin{equation}
\dot{x}_{i_1}=\sum_{i_2, I} \Phi_{i_1 i_2 I} (x)-\sum_{i_2, I} \Phi_{i_2 i_1 I} (x).
\end{equation}
Define a Lyapunov function \hjk{$V(x)=\sum_{i=1}^n\left[x_i \ln \frac{x_i}{x_i^{*}}-x_i+x_i^{*}\right] \geq 0$.}
% $$V(x)=\sum_{i=1}^n\left[x_i \ln \frac{x_i}{x_i^{*}}-x_i+x_i^{*}\right] \geq 0.$$  
It follows that
\begin{equation}\label{eq:V}
    \begin{split}
        \dot{V}&=\sum_i \ln \frac{x_i}{x_i^{*}} \dot{x}_i=\sum_{i_1, i_2, I}\left(\ln \frac{x_{i_1}}{x_{i_1}^{*}}-\ln \frac{x_{i_2}}{x_{i_2}^{*}}\right) \Phi_{i_1 i_2 I} \\
    \end{split}
\end{equation}
Note that the two terms above counts both ordered pairs 
$(i_1,i_2)$ and their reversed pairs $(i_2,i_1)$. 
To make the expression symmetric, we add and subtract the same term 
with exchanged indices and take the average:
$\sum_{i_1,i_2,I}T_{i_1 i_2}\Phi_{i_1 i_2 I}
=\tfrac12\sum_{i_1,i_2,I}\!\big(T_{i_1 i_2}\Phi_{i_1 i_2 I}
   +T_{i_2 i_1}\Phi_{i_2 i_1 I}\big),$
where $T_{i_1 i_2}:=\ln\!\frac{x_{i_1}}{x_{i_1}^{*}}
-\ln\!\frac{x_{i_2}}{x_{i_2}^{*}}$ and thus $T_{i_2 i_1}=-T_{i_1 i_2}$. This yields that $\dot{V}=\frac{1}{2} \sum_{i_1, i_2, I}\left(\ln \frac{x_{i_1}}{x_{i_1}^{*}}-\ln \frac{x_{i_2}}{x_{i_2}^{*}}\right)\left(\Phi_{i_1 i_2 I}-\Phi_{i_2 i_1 I}\right).$

Next, let $
\sigma_{i_1 i_2 I}:=r_{i_1 i_2 I} x_{i_2}^{*}=r_{i_2 i_1 I} x_{i_1}^{*} \geq 0, \quad C_I:=\prod_{k=3}^m x_{i_k}^{*}>0,
$
and define
$
a:=\frac{x_{i_2}}{x_{i_2}^{*}} \prod_{k=3}^m \frac{x_{i_k}}{x_{i_k}^{*}}, \quad b:=\frac{x_{i_1}}{x_{i_1}^{*}} \prod_{k=3}^m \frac{x_{i_k}}{x_{i_k}^{*}} .
$
Then, we have 
$
\Phi_{i_1 i_2 I}-\Phi_{i_2 i_1 I}=\sigma_{i_1 i_2 I} C_I(a-b), \quad \ln \frac{x_{i_1}}{x_{i_1}^{*}}-\ln \frac{x_{i_2}}{x_{i_2}^{*}}=\ln \frac{b}{a}.
$
Plugging the above into \eqref{eq:V} yields:
$
\dot{V}=\frac{1}{2} \sum_{i_1, i_2, I} \sigma_{i_1 i_2 I} C_I(a-b) \ln \frac{b}{a} .
$

For any $u, v>0$, one has $(u-v) \ln (u / v) \geq 0$. 
%This could be proven by Mean Value Theorem and is omitted here. 
With $u=b, v=a$, we have 
$
(a-b) \ln \frac{b}{a}=-(b-a) \ln \frac{b}{a} \leq 0.
$  Since $\sigma_{i_1 i_2 I} C_I \geq 0$, we conclude that $\dot{V}\leq 0$, which holds as an equality iff $a=b$. Thus $\dot{V} \equiv 0$ iff for all $i_1, i_2, I$ with $\sigma_{i_1 i_2 I}>0$,
$
\frac{x_{i_1}}{x_{i_1}^{*}}=\frac{x_{i_2}}{x_{i_2}^{*}} .
$
By strong connectivity, this forces
$
\frac{x_1}{x_1^{*}}=\cdots=\frac{x_n}{x_n^{*}}=: \alpha.
$
LaSalle's invariance principle with $\dot{V} \leq 0$ gives
$
x(t) \rightarrow \alpha x^{*} .
$

By further considering the conservative quantity from Lemma \ref{thm:pos}, then the trajectory from $x(0)$ must converge to $\alpha x^{*}>\mathbf{0}$ with $\alpha=\frac{M}{\mathbf{1}^{\top} {x}^{*}}$.
\end{proof}

\section{Application: Random Walks on Hypergraphs}\label{sec:random_walks}
% Here, we focus on illustrating how this framework can be used to analyze and interpret 
% random walks on hypergraphs.
% We now apply the higher-order Markov chain framework developed earlier to define random walks on hypergraphs.
% The normalized adjacency tensor $\Tilde{A}$ introduced in Section~2 naturally serves 
% as the transition tensor for this process.

\hjk{
We now apply the framework of Section \ref{sec:markov_memory} to define random walks on hypergraphs, using the normalized adjacency tensor $\tilde{A}$ from Section \ref{sec:preliminaries} as the transition tensor.
}

\begin{definition}[Random Walk on a Hypergraph]
Consider a hypergraph with normalized adjacency tensor $\Tilde{A}$.
A \emph{random walk with memory $m-1$} is defined as: 
at time $t$, the walker has visited vertices 
\((v_t, v_{t-1}, \ldots, v_{t-m+2})\). 
The next step proceeds in two stages:
\begin{enumerate}
    \item The walker selects a hyperedge $e$ whose ordered tail set matches 
    \(\{v_t, v_{t-1}, \ldots, v_{t-m+2}\}\),
    with probability given by the corresponding entry of $\Tilde{A}$. (If a hyperedge doesn't exist, its probability is zero. For continuous-time setting, the tensor ${A}$ denotes the inflow rate. (Normalization is unnecessary for continuous-time setting.))
    \item The walker moves to the head node $v_{t+1}=h$ of the chosen hyperedge $e$ at time $t+1$.
\end{enumerate}
This process defines a higher-order Markov chain with transition (inflow) tensor $\Tilde{A}$.
\end{definition}

\begin{figure}[t]
    \centering
    \includegraphics[width=0.8\linewidth]{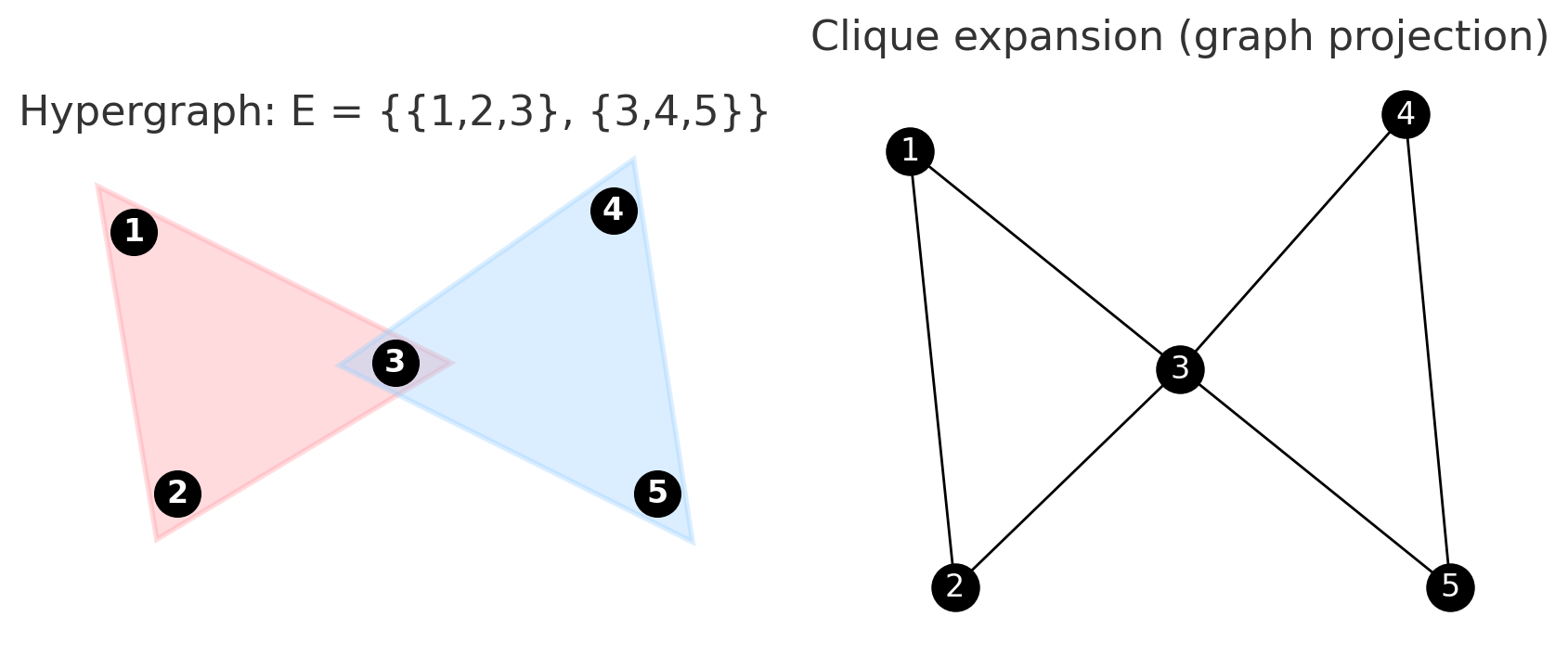}
    \caption{Left: original hypergraph $E=\{\{1,2,3\},\{3,4,5\}\}$, where all hyperedges are undirected. For example, $\{1,2,3\}$ denotes composition of all ordered tails and heads induced by  $1,2,3$. Right: the corresponding projected graph, obtained by replacing each hyperedge with a complete pairwise subgraph. All edges and hyperedges are equally weighted.}
    \label{fig:hypergraph_rw}
\end{figure}

Consider the hypergraph shown in Fig.~\ref{fig:hypergraph_rw} (left). For the \emph{memory-based random walk} on this hypergraph 
with depth $m-1=2$, each 3-hyperedge produces internal cyclic trajectories in the unfolded state space. 
As a consequence, the process splits into two closed communicating classes: one supported on $\{1,2,3\}$ 
and the other on $\{3,4,5\}$. Using the tensor unfolding method introduced earlier \eqref{eq:master_equation}, one can compute 
the corresponding stationary node distributions as
$
x^{*(1)} = \Big(\tfrac{1}{3}, \tfrac{1}{3}, \tfrac{1}{3}, 0, 0\Big), 
\qquad
x^{*(2)} = \Big(0, 0, \tfrac{1}{3}, \tfrac{1}{3}, \tfrac{1}{3}\Big).
$
Hence, the hypergraph random walk does not admit a unique global stationary distribution; the limit 
depends on the initial condition.

For comparison, Fig.~\ref{fig:hypergraph_rw} (right) depicts the corresponding \emph{projected graph} \cite{carletti2020random,carletti2021random}, in which 
each hyperedge is replaced by a complete pairwise subgraph among each hyperedge. On this graph, a classical random walk 
is irreducible and aperiodic, and thus possesses a unique stationary distribution. The resulting distribution is
$x^* = \Big(\tfrac{1}{6}, \tfrac{1}{6}, \tfrac{1}{3}, \tfrac{1}{6}, \tfrac{1}{6}\Big),$
where node~3 obtains the largest weight.

\hjk{
This example reveals a fundamental difference from classical approaches. Most existing hypergraph random walks \cite{carletti2020random,carletti2021random} can be reformulated as memoryless walks on a weighted projected graph, since each hyperedge is reduced to pairwise connections. In contrast, our formulation is built on a higher-order Markov chain with memory, where the ordered tail of each hyperedge encodes past states. This process cannot be collapsed into a projection graph; instead, it corresponds to a walk on an unfolded graph whose nodes represent memory sequences of length $m-1$, thereby revealing behavior inaccessible to memoryless approaches.
}

% This example shows a fundamental difference between our definition of random walks on hypergraphs and the classical random walk. Furthermore, most existing definitions of hypergraph random walks can ultimately be reformulated as standard 
% random walks on a weighted projected graph, where each hyperedge is reduced to pairwise 
% connections with appropriate weights. 
% For instance, Carletti \emph{et al.} \cite{carletti2020random,carletti2021random} consider 
% \emph{uniform diffusion}, where the walker first selects uniformly among hyperedges containing the 
% current node and then moves uniformly to another node within that hyperedge. 
% This process is mathematically equivalent to a memoryless random walk on a weighted graph whose edge 
% weights reflect hyperedge projection. 
% In contrast, our formulation is explicitly built on a higher-order Markov chain with memory, where 
% the ordered tail of each hyperedge encodes a sequence of past states. 
% While this process cannot be collapsed into a simple weighted projection graph, it can be interpreted 
% as a random walk on an unfolded extended graph whose nodes represent all possible memory sequences of length $m-1$. 
% Consequently, our model reveals higher-order random walk behavior that is fundamentally inaccessible to 
% memoryless approaches based solely on weighted projection graphs.

\begin{figure}[t]
    \centering
    \includegraphics[width=0.8\linewidth]{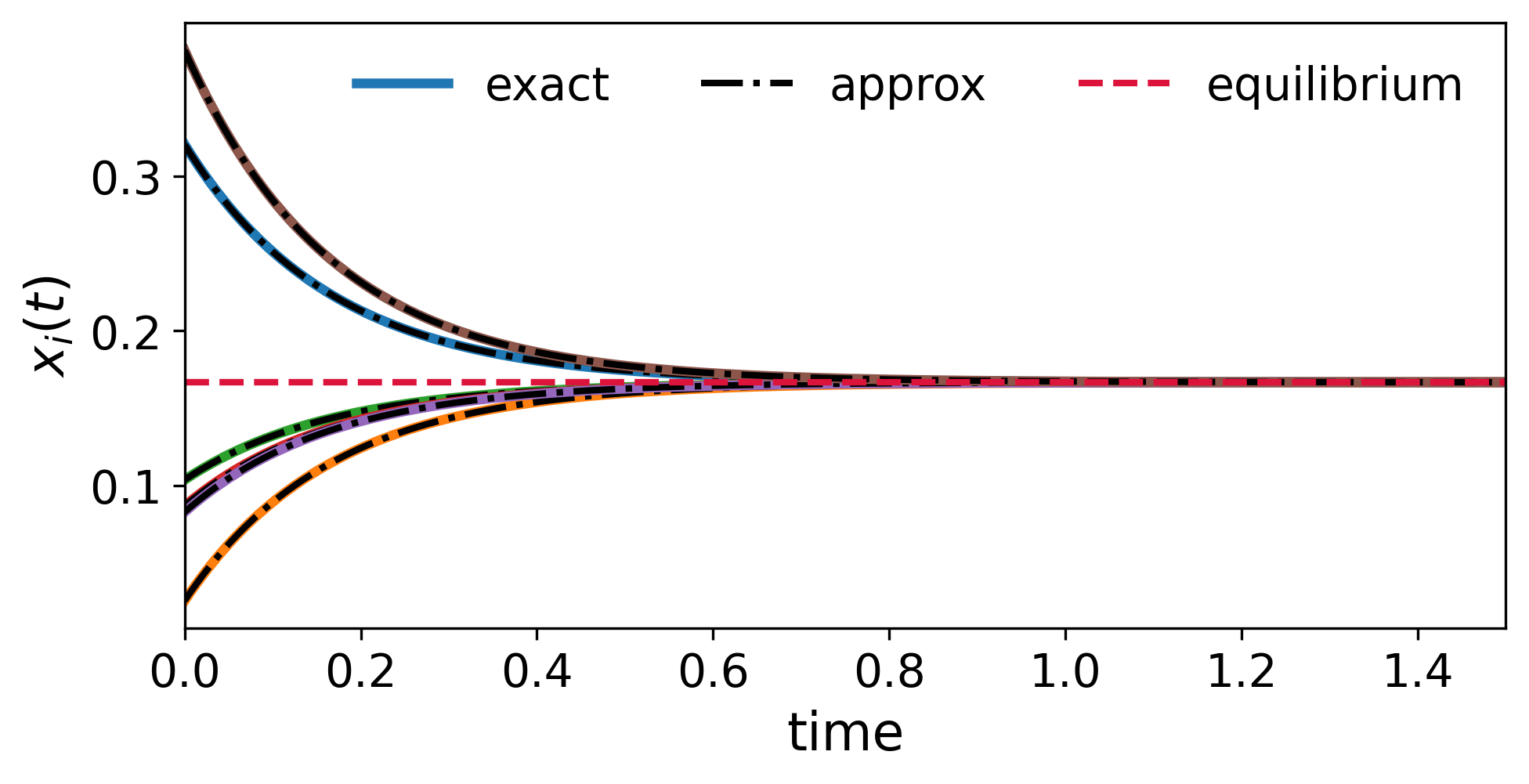}
    \caption{
Comparison between the unfolded higher-order Markov dynamics \eqref{eq:master_equation} and its nonlinear Laplacian approximation \eqref{eq:lap}. 
Both models start from the same initial marginal $x(0)$ with $\mathbf{1}^{\top} x(0)=1$ and $\Pi(0)=x(0)^{\otimes (m-1)}$ in the unfolded system. 
In the supersymmetric (undirected) case, their trajectories are sufficiently close and converge to the same uniform distribution $x_i = 1/n$. 
Solid lines: exact unfolded dynamics; dashed lines: nonlinear approximation; dotted line: uniform equilibrium $1/n$.
}
    \label{fig:compare}
\end{figure}

As discussed in Section~\ref{subsec:continuous_time}, the continuous-time random walk dynamics can be reduced to a higher-order nonlinear Laplacian system~\eqref{eq:lap}.
Here we perform a simulation with $n=6$ nodes and order $m=5$.
For simplicity, we set $\mathcal{R}=A$ as an all-one supersymmetric tensor.
We compare the simulation results obtained from the exact system~\eqref{eq:master_equation} and from its nonlinear Laplacian approximation~\eqref{eq:lap}, as illustrated in Fig.~\ref{fig:compare}.
The results show that the higher-order nonlinear Laplacian model~\eqref{eq:lap} closely matches the exact continuous-time random walk dynamics in both transient evolution and steady-state distribution.

\section{Conclusion}

\hjk{
This paper develops a unified tensor framework for higher-order Markov chains with memory and random walks on hypergraphs. The even-order paired tensor representation links folded and unfolded dynamics, and the nonlinear Laplacian reduction enables a full stability analysis under a higher-order detailed-balance condition. Future directions include multi-player settings with strategic updates, heterogeneous memory depths on non-uniform hypergraphs, and the analysis of multiple equilibria in the nonlinear Laplacian dynamics \eqref{eq:lap}.
}

% This paper develops a unified tensor-based framework for modeling higher-order Markov chains with memory
% and demonstrates its application to random walks on hypergraphs. 
% By introducing an even-order paired tensor representation, we establish a direct connection between 
% transition dynamics, eigenvalue properties, and stationary behavior. 
% Our formulation is more general than existing approaches and reveals the precise mathematical relationship 
% between folded and unfolded representations. Then, we introduce the nonlinear Laplacian system analysis for higher-order random walk dynamics. This nonlinear formulation provides an interpretable reduced model that approximates both transient and steady-state behaviors of the original Markov system. Under a mild condition, we give a full analysis of the system behavior of such a nonlinear Laplacian system.
% For future work, we will extend the framework to \emph{multi-player} settings on hypergraphs, coupling interacting agents/walkers with strategic updates. 
% We will also consider integrating heterogeneous memory depths with \emph{non-uniform} hypergraphs. For the nonlinear Laplacian dynamics \eqref{eq:lap}, it is also worthwhile studying the case when there are multiple equilibra. 

\bibliographystyle{IEEEtran}
\bibliography{bib}

\end{document}